\newtheorem{theorem}{Theorem}
\newtheorem{lemma}{Lemma}
\newtheorem{assumption}{Assumption}
\begin{document}

\title{Finite-horizon Online Transmission Rate and Power Adaptation on a Communication Link with Markovian Energy Harvesting
\thanks{This work was supported by TT grant 4893-03}\\
} 
\author{Baran Tan Bacinoglu, Elif Uysal-Biyikoglu\\
Middle East Technical University, Ankara, 06800 Turkey\\
e-mail: tbacinoglu@gmail.com, elif@eee.metu.edu.tr}
\maketitle

\begin{abstract}
As energy harvesting communication systems emerge, there is a need for transmission schemes that dynamically adapt to the energy harvesting process. In this paper, after exhibiting a finite-horizon online throughput-maximizing scheduling problem formulation and the structure of its optimal solution within a dynamic programming formulation, a low complexity online scheduling policy is proposed. The policy exploits the existence of thresholds for choosing rate and power levels as a function of stored energy, harvest state and time until the end of the horizon. The policy, which is based on computing an expected threshold, performs close to optimal on a wide range of example energy harvest patterns. Moreover, it achieves higher throughput values for a given delay, than throughput-optimal online policies developed based on infinite-horizon formulations in recent literature. The solution is extended to include ergodic time-varying (fading) channels, and a corresponding low complexity policy is proposed and evaluated for this case as well.
\end{abstract}

\begin{IEEEkeywords}
Packet scheduling, energy harvesting,  energy-efficient scheduling, online policy, threshold policy, finite-horizon, dynamic programming, throughput, waterfilling.
\end{IEEEkeywords}
 
\section{INTRODUCTION}
Energy harvesting is a rapidly developing area of technology and also a promising solution to the network lifetime problem in wireless settings. Today's technology enables the construction of various kinds of energy harvesters in small packages, which may be integrated with wireless devices to provide almost perpetual energy. However, exploiting this energy introduces challenges for the design of transmission schemes, in particular the allocation of transmission power and rate across time, due to the unsteady or nondeterministic availability of ambient energy sources. 

Fundamentally, energy availability in an energy harvesting device relies on both its energy storage capability (i.e., battery) and the energy harvesting process. The energy harvesting process is not always controllable or predictable, largely depending on the source of energy. For example, outdoor solar energy harvesting can be considered to be partially predictable for a static solar cell, but rather stochastic on a mobile device~\cite{5934952}. When the harvesting process is fully predictable, the use of this energy can be planned {\emph{offline}}. On the other hand, online decisions are required for problems when energy harvest events cannot be well predicted.

Offline and online transmission optimization problems for energy harvesting communication systems have attracted interest in recent years. In related work for point-to-point communication, throughput optimal scheduling policies for a single energy harvesting sensor node have been developed (e.g.,~\cite{5441354}). The offline transmission completion time minimization problem on an energy harvesting communication link has been formulated and solved in~\cite{5464947}. A dynamic programming solution is proposed for a finite-horizon throughput maximization problem over a fading channel with an energy harvesting transmitter in~\cite{5513719}. In~\cite{5766183}, a similar problem is considered, and addressed through stochastic dynamic programming, followed by the proposal of several suboptimal adaptive transmission policies. In ~\cite{DBLP:journals/jcn/KashefE12}, an online throughput maximization problem over a Gilbert-Elliot channel with an energy harvesting source is formulated as a Markov decision problem with ``transmit'' and ``defer'' actions and it is proved that a threshold-type policy is optimal over this set of actions. For fading channels, the outage probability of an energy harvesting node is examined in ~\cite{5425655} where the energy profile is modeled as a discrete Markov process. Some practical battery limitations such as battery size and constant battery leakage are considered and offline optimal transmission schemes under these limitations are investigated in ~\cite{DBLP:journals/jcn/DevillersG12}.

This paper considers a finite-horizon online throughput maximization problem over a point-to-point link with an energy harvesting transmitter, similarly to~\cite{5513719} and ~\cite{5766183}. As opposed to previous studies \cite{5441354}-\cite{5766183}, transmission power decisions are restricted to a discrete set, motivated by practical implementation constraints. A Markovian energy harvesting process and static channel conditions are assumed. Contrary to the study in  \cite{DBLP:journals/jcn/DevillersG12}, we assume a simple battery model where it behaves as an energy buffer with unlimited capacity. In this respect, this study is perhaps closest to the work in~\cite{DBLP:journals/jcn/KashefE12} which also solved a Markov decision problem (in a Gilbert-Elliot channel) with a discrete set of transmission decisions. However, while the action set~\cite{DBLP:journals/jcn/KashefE12} in was limited to ``transmit'' and ``defer" actions and nonlinearity in power-rate relation was not taken into account, one of the contributions of this work is the structure of the optimal online policy for a general discrete transmission set, when power is a convex function of data rate. The second and main contribution is a low complexity online heuristic that exploits the optimal offline solution and approaches the performance of the optimal {\emph{online}} solution. While similar dynamic programming solutions have been proposed in ~\cite{5441354},\cite{5464947}-\cite{DBLP:journals/jcn/KashefE12}, this scheduling heuristic and the approach for deriving it is novel to the best of our knowledge. Another contribution of the paper is a comparison with the infinite-horizon optimal scheduling policy, which points to interesting future directions. Finally, the problem is extended to address the time-varying case, and a policy which dynamically computes an expected �water level� is proposed for this case.

\section{PROBLEM DEFINITION}
Consider a point-to-point communication link with an energy harvesting transmitter. The transmitter has sufficient data to send at the beginning of the time period under consideration, and the goal is to maximize the amount of data transmitted (equivalently, throughput) over this finite horizon, by adjusting transmission rate and power in time judiciously in response to the energy harvested. There is assumed to be a sufficiently large backlog of data such that the data buffer will not be emptied even by continuously transmitting at the highest possible rate until the end of the transmission period. This �infinite buffer� assumption has often been used in the literature and is practically relevant to a system which is concerned with maximizing throughput.  Communication rate is assumed to be a concave, monotone increasing function of transmit power, hence, energy can be more efficiently spent by communicating at low rate. Time is slotted into intervals of a certain duration such that a power/rate decision will be made dynamically at the beginning of each slot. Let $\rho$ be the energy consumption per slot when the transmission rate is chosen as $r=g(\rho)$, $g(\rho)$ is assumed to be strictly concave and increasing in $\rho$ (~\cite{5441354}-\cite{5766183}).

The function $g_{r}(e,\rho)$ below provides the number of bits delivered during a slot duration when $e$ is the energy available for transmission at the begining of the slot. Note that this expression allows for the event that the energy $e$ reserved for transmission is too low for transmitting at $\rho$ for a whole slot, and in that case the transmitter will be active during part of the slot and idle in the remainder of the slot. 

 \begin{equation}
 g_{r}(e,\rho)= g(\rho)\min\left(\frac{e}{\rho},1\right) 
\label{eq:gr}
\end{equation}

We shall number the slots backwards in time from the deadline such that slot $1$ is the slot closest to the deadline, and slot $N<\infty$ represents the beginning of the time period. Let $e_{n}$ be the stored energy at the beginning of slot $n$ and $\rho_{n}$ the transmission power level decision for this slot. The power level decision $\rho_{n}$ can be picked from a finite discrete set $\mathbf{U}$. Any collection of decisions $\rho_{N},.......,\rho_{1}$ is a {\emph{transmission trajectory}}, and hence there are $\vert \mathbf{U} \vert^{N}$ possible trajectories. 

The stored energy $e_{n}$ is a function of the energy at the beginning of the previous slot, $e_{n+1}$, the power decision $\rho_{n+1}$ and $H_{n}$, energy harvested {\emph{during}} the previous slot:
 \begin{equation}
e_{n}=\left( e_{n+1}-\rho_{n+1}\right)_{+}+H_{n}
\label{eq:en}
\end{equation}

Harvested energy will be modelled as a stochastic process $\{H_{n}\}$, $n\geq 1$, taking values in a discrete state-space. Let $H_{n}^{m}$, $m\geq n$ denote the vector $[H_{n},....,H_{m}]$. Accordingly, stored energy at time $n$ is a discrete random variable depending on $H_{n}^{N}$ and the previous power decisions, $\rho_{n+1}$ through $\rho_N$. 

The objective function is the expectation of total throughput, over the statistics of the harvest process. An \emph{online policy} is one that produces a decision $\varrho_{n}$ at each slot $n$ with knowledge of previous energy harvests $H_{n}^{N}$ and the current stored energy $e_{n}$. Then, an online transmission policy $\varrho$, is a collection $\varrho_{N},.......,\varrho_{1}$ and an optimal online policy $\varrho^{*}$ is one that maximizes the expected throughput over $N$ slots:

 \begin{equation}
\varrho^{*} = \displaystyle\arg\max_{\varrho}  \displaystyle\sum_{n=1}^{N} E\left[g_{r}(e_{n},\varrho_{n}(H_{n}^{N},e_{n}))\right] 
\label{eq:pi*}
\end{equation}

\section{OPTIMAL SOLUTION WITH DYNAMIC PROGRAMMING}

In the rest, we shall limit attention to $\{H_{n}\}$, $n\geq 1$, that can be described as a first-order Markov process, with transition probabilities $q_{ij}$ between harvest states $h_{i}$ and $h_{j}$, such that: $q_{ij}=P(H_{n}=h_{j}\vert H_{n+1}=h_{i})$.  Let $(e,h)$ be the combined state for stored energy level and harvest state, and $V_{n}^{*} \left( e,h\right)$ be the maximum expected throughput for the next $n$ slots till the deadline. Then, the problem can be formulated using a dynamic programming equation as below.

\begin{equation}
 V_{n}^{*} \left( e,h\right)= \displaystyle\max_{\rho \in \mathbf{U}}  V_{n} \left( e,h_{i},\rho \right), \;n>1\mbox{, where}
\label{eq:Vn*}
\end{equation}
\[
 V_{n} \left( e,h_{i},\rho \right)= g_{r}(e,\rho)+ \displaystyle\sum_{j}q_{ij}V_{n-1}^{*} \left( \left( e-\rho \right)_{+}+h_{j},h_{j}\right)
\label{eq:Vn}
\]
\[
 V_{1}^{*} \left( e,h_{i}\right)= \displaystyle\max_{\rho \in \mathbf{U}}
g_{r}(e,\rho) 
\label{eq:V1*}
\]

Note that, as the energy harvested during the last slot will not be used, $V_{1}^{*} \left( e,h_{i}\right)$, which represents the throughput in the last slot, does not depend on the harvest state $h_{i}$. An explicit form of the function $V_{1}^{*} \left( e,h_{i}\right)$ is provided
in (\ref{eq:V1*e}). 

\vspace{-0.1 in}
\begin{equation}
 V_{1}^{*} \left( e,h_{i}\right)= \left\{ \begin{array}{ll}
        g(\rho_{1})\left( \frac{e}{\rho_{1}}\right)  & \mbox{; $e < \rho_{1}$}\\
        g(\rho_{1}) & \mbox{; $\rho_{1} \leq e<\frac{g(\rho_{1})}{g(\rho_{2})}\rho_{2}$}\\
        g(\rho_{2})\left( \frac{e}{\rho_{2}}\right)  & \mbox{; $\frac{g(\rho_{1})}{g(\rho_{2})}\rho_{2} \leq e<\rho_{2}$}\\
        g(\rho_{2}) & \mbox{; $\rho_{2} \leq e<\frac{g(\rho_{2})}{g(\rho_{3})}\rho_{3}$   .......}\\
        \end{array} \right. 
\label{eq:V1*e}
\end{equation}

The function $V_{n}^{*} \left( e,h_{i}\right)$ can be evaluated by backward induction starting from $V_{1}^{*} \left( e,h_{i}\right)$. An optimal solution is a set of decision rules defined as:
 \begin{equation}
\varrho_{n}^{*}\left( e,h_{i}\right) = \displaystyle\arg\max_{\rho \in \mathbf{U}} V_{n} \left( e,h_{i},\rho \right) 
\label{eq:rn}
\end{equation}

It should be noted that the value function exhibits relatively small dependence on the energy state $e_n$ and the harvest state $h_n$ than time. As example, Fig. \ref{valuefunction120256} shows plots the variation of the value function with respect to stored energy and the time (number of slots) until the end of the horizon, for two extreme harvest states (the specific state spaces will be described in Section~\ref{evaluation}). 

Before addressing the structure of the solution, we make a final, technical assumption about the set of power levels that prevents anomalous decision regions and deems threshold results possible. It is possible to generate families of rates that do not satisfy this assumption, but it is straightforward to show the existence of sets of power levels that satisfy this assumption-such sets have been used in our numerical examples.
\begin{assumption}
\label{assumption:funny}
Let $\rho > \rho'$ where $(\rho,\rho')\in U^{2}$, then if $V_{n} \left( e,h_{i},\rho\right) > V_{n} \left( e,h_{i},\rho'\right)$ for some energy level $e$, then $V_{n} \left( e+\delta,h_{i},\rho\right) > V_{n} \left( e+\delta,h_{i},\rho'\right)$ for any $\delta > 0 $.
\end{assumption}

Roughly, the assumption states that for the set of rates, power levels, and harvest statistics under considereation, if a higher power level is preferred over a lower one at some energy level, it will continue to be preferred at an higher energy level. While this relationship may intuitively appear to always hold, we have observed situations where it does not hold. The reason for this is the piecewise flatness of the value functions as seen in (\ref{eq:V1*e}) for $V_{1}^{*} \left( e,h_{i}\right)$.


An example where the relationship does not hold is the following: Suppose that $\rho > \rho'$ and $V_{2} \left( e,h_{i},\rho\right)-V_{2} \left( e,h_{i},\rho'\right)=\Delta > 0$ for some energy level $e>\max(\rho ,\rho')$. Then, consider the difference $V_{2} \left( e+\delta,h_{i},\rho\right)- V_{2} \left( e+\delta,h_{i},\rho'\right)$ where $\delta$ is a positive energy increment. For any given discrete set of power levels, we can find  positive probability mass function values $h_{j}$s for energy harvesting process such that both $ e-\rho +h_{j}$ and $e+\delta-\rho +h_{j}$ are in the range $(\rho_{m},\frac{g(\rho_{m})}{g(\rho_{m+1})}\rho_{m+1})$ for some $m$ and sufficiently small $\delta$. Hence, the value function $V_{2} \left( e,h_{i},\rho\right)$ remains constant for an energy increment $\delta$.  On the other hand, the value function $V_{2} \left( e,h_{i},\rho'\right)$ does not have to remain constant for the same setting and can increase for the same energy increment $\delta$ and this increase can be larger than $\Delta > 0$ since $\Delta$ can be infinitely small independent from $\delta$.  Therefore, the difference $V_{2} \left( e+\delta,h_{i},\rho\right)- V_{2} \left( e+\delta,h_{i},\rho'\right)$ can be negative for some energy increment $\delta$. However, such cases are rare and, ignoring them, we limit our attention to problems that obey Assumption~\label{assumption:funny}.

\begin{theorem}
Let $\rho_{min}$ be the minimum nonzero power decision in the set $\mathbf{U}$. Then, whenever the stored energy $e$ is less than $\rho_{min}$, the optimal decision is $\rho_{min}$.
\label{theoremrhomin} 
\end{theorem}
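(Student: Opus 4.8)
The plan is to reduce the comparison at a low energy level to two separate sub-comparisons: against the larger nonzero power levels, and against the ``defer'' action $\rho=0$, which the word ``nonzero'' in the statement indicates is available in $\mathbf{U}$. The starting observation is that when $e<\rho_{min}$, every nonzero candidate $\rho\in\mathbf{U}$ satisfies $\rho\ge\rho_{min}>e$, so that $(e-\rho)_{+}=0$ and $g_{r}(e,\rho)=g(\rho)\min(e/\rho,1)=e\,g(\rho)/\rho$. Substituting into the definition of $V_{n}(e,h_{i},\rho)$ in (\ref{eq:Vn*}) shows that for every nonzero $\rho$ the next state collapses to $(0+h_{j},h_{j})$, so the future term $\sum_{j}q_{ij}V_{n-1}^{*}(h_{j},h_{j})$ is identical across all nonzero power choices. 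Hence maximizing $V_{n}(e,h_{i},\rho)$ over nonzero $\rho$ reduces to maximizing the per-unit-energy efficiency $g(\rho)/\rho$.

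The next step is to argue that $g(\rho)/\rho$ is maximized at $\rho_{min}$. Using $g(0)=0$ (zero power yields zero rate) together with strict concavity, for any $\rho'<\rho$ one has $g(\rho')\ge(\rho'/\rho)g(\rho)$, i.e. $g(\rho')/\rho'\ge g(\rho)/\rho$; thus efficiency is strictly decreasing in $\rho$ and is largest at the smallest nonzero level $\rho_{min}$. This settles the claim against every larger nonzero power level.

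The remaining and main obstacle is the comparison against deferring, for which the next state is $(e+h_{j},h_{j})$ and the immediate reward is zero, so $V_{n}(e,h_{i},0)=\sum_{j}q_{ij}V_{n-1}^{*}(e+h_{j},h_{j})$. I would establish the auxiliary fact that the value function is Lipschitz in stored energy with constant $\eta:=g(\rho_{min})/\rho_{min}$, namely $V_{m}^{*}(e_{2},h)-V_{m}^{*}(e_{1},h)\le\eta(e_{2}-e_{1})$ for all $e_{2}\ge e_{1}$. Granting this, the advantage of transmitting at $\rho_{min}$ over deferring equals $e\,\eta-\sum_{j}q_{ij}\big(V_{n-1}^{*}(e+h_{j},h_{j})-V_{n-1}^{*}(h_{j},h_{j})\big)\ge e\,\eta-\eta\,e=0$, so $\rho_{min}$ is at least as good as deferring and is therefore an optimal decision.

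The Lipschitz bound itself I would prove by backward induction on the horizon. The base case follows by inspecting the explicit piecewise-linear form of $V_{1}^{*}$ in (\ref{eq:V1*e}): its increasing pieces have slopes $g(\rho_{k})/\rho_{k}\le\eta$ and its flat pieces have slope zero. For the inductive step, I fix $e_{1}<e_{2}$, let $\rho^{\star}$ be optimal at $e_{2}$, and evaluate $V_{n}^{*}(e_{1},h)$ using the same (possibly suboptimal) $\rho^{\star}$; I then bound the immediate-reward difference by $\eta$ times the length of the unsaturated sub-interval, and the future-value difference by the induction hypothesis applied to $0\le(e_{2}-\rho^{\star})_{+}-(e_{1}-\rho^{\star})_{+}\le e_{2}-e_{1}$. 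Splitting into the three cases according to whether $\rho^{\star}$ lies below $e_{1}$, between $e_{1}$ and $e_{2}$, or above $e_{2}$, each case telescopes to the clean bound $\eta(e_{2}-e_{1})$. I expect the careful bookkeeping of the $(\cdot)_{+}$ truncation across these cases to be the most delicate part; note that this argument is self-contained and does not rely on Assumption~\ref{assumption:funny}.
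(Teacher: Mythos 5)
Your proposal is correct, and it follows the same two-way decomposition as the paper's proof (compare $\rho_{min}$ against the larger nonzero levels, then against idling), but it is substantially more rigorous on the second comparison. For the nonzero levels, both you and the paper invoke concavity to get that $g(\rho)/\rho$ is maximized at $\rho_{min}$ and observe that the continuation state $(h_{j},h_{j})$ is common to all nonzero choices, so the comparison reduces to the immediate reward. The difference is in the idling case: the paper disposes of it with the single sentence that ``transmitting with $\rho_{min}$ is always the most efficient way to consume energy in terms of throughput per energy, idling during any slot is meaningless,'' which is an assertion rather than a proof --- it leaves open the possibility that carrying the energy $e$ forward could raise the future value by more than the immediate gain $e\,g(\rho_{min})/\rho_{min}$. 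Your Lipschitz lemma, $V_{m}^{*}(e_{2},h)-V_{m}^{*}(e_{1},h)\le \eta\,(e_{2}-e_{1})$ with $\eta=g(\rho_{min})/\rho_{min}$, is exactly the missing quantitative content of that sentence, and your backward induction (base case from the explicit piecewise-linear form of $V_{1}^{*}$ in (\ref{eq:V1*e}), inductive step via the three-case split on the position of $\rho^{\star}$ relative to $e_{1},e_{2}$) checks out: in each case the immediate-reward increment is at most $\eta$ times the unsaturated portion and the continuation increment is at most $\eta$ times the rest, so the bound telescopes. Two minor points worth making explicit if you write this up: you need $g(0)=0$ for the efficiency monotonicity (the paper assumes it implicitly), and your argument only shows $\rho_{min}$ is \emph{an} optimal decision (weakly better than idling), which matches what the theorem actually needs. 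Your observation that none of this uses Assumption~\ref{assumption:funny} is also correct and worth keeping.
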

\begin{proof}
From the concavity of the function $g(\rho)$, it can be seen that $g_{r}(e,\rho_{min})$ gives the largest
one-slot throughput among the nonzero decisions in the set $\mathbf{U}$. Also, when $e_{n}<\rho_{min}$, $e_{n-1}$ is equal to $H_{n-1}$ for all nonzero decisions and $e_{n}+H_{n-1}$ for the decision of not transmitting during that slot (being idle). But since the channel is static and transmitting with $\rho_{min}$ is always the most efficient way to consume energy in terms of throughput per energy, idling during any slot is meaningless. Therefore, $\rho_{min}$ is the optimal decision for every slot where
$e<\rho_{min}$.
\end{proof}

Now we are ready to show a set of threshold results for  $\rho_{n}^{*}\left( e,h_{i}\right)$.

\begin{lemma}
Let $\rho > \rho'$ where $(\rho,\rho')\in U^{2}$, then $V_{n} \left( e,h_{i},\rho\right) > V_{n} \left( e,h_{i},\rho'\right)$
when $e > (n-1)\rho_{max}+\rho$ for any $n$ and $h_{i}$ where $\rho_{max}=\displaystyle\max_{\rho \in U} \rho $.
\end{lemma}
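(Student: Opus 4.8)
The plan is to exploit a \emph{saturation} property of the value function: once the stored energy is large relative to the number of remaining slots, energy ceases to be a binding constraint over the horizon, and the optimal continuation simply maximizes per-slot throughput in every slot, which (since $g$ is increasing) means transmitting at $\rho_{max}$. Concretely, I would first establish the auxiliary claim that
\[
V_{m}^{*}(e',h) = m\, g(\rho_{max}) \quad \text{for every } m\geq 1,\ \text{every } h,\ \text{whenever } e' \geq m\,\rho_{max}.
\]
The threshold $m\rho_{max}$ is exactly the energy that guarantees one can transmit at the maximum power in each of the $m$ remaining slots without ever triggering the truncation in $(e-\rho)_{+}$ or the clipping $\min(e/\rho,1)$ in $g_{r}(e,\rho)$, so the dynamic program degenerates into $m$ independent per-slot maximizations.

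I would prove this claim by backward induction on $m$. For the base case $m=1$, since $e'\geq\rho_{max}\geq\rho$ for every $\rho\in\mathbf{U}$, the factor $\min(e'/\rho,1)$ equals $1$ for all admissible $\rho$, so $V_{1}^{*}(e',h)=\max_{\rho}g(\rho)=g(\rho_{max})$ by monotonicity of $g$. For the inductive step, assume the claim at $m-1$ and take $e'\geq m\rho_{max}$. Any $\rho\in\mathbf{U}$ yields full-slot throughput $g_{r}(e',\rho)=g(\rho)$, and the successor energy $(e'-\rho)_{+}+h_{j}=e'-\rho+h_{j}$ satisfies $e'-\rho+h_{j}\geq e'-\rho_{max}\geq(m-1)\rho_{max}$, using the nonnegativity of the harvest $h_{j}$. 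The inductive hypothesis then gives $V_{m-1}^{*}(\cdot)=(m-1)g(\rho_{max})$ for every $j$, hence $V_{m}(e',h_{i},\rho)=g(\rho)+(m-1)g(\rho_{max})$; maximizing over $\rho$ and invoking monotonicity of $g$ selects $\rho_{max}$ and returns $m\,g(\rho_{max})$, closing the induction.

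With the claim in hand the lemma follows. For $n=1$ the inequality is immediate, since $e>\rho$ forces full slots for both powers and $g_{r}(e,\rho)=g(\rho)>g(\rho')=g_{r}(e,\rho')$. For $n\geq 2$, the hypothesis $e>(n-1)\rho_{max}+\rho$ makes both candidate decisions lead to successor energies that clear the saturation threshold of the remaining $n-1$ slots: for the larger power, $e-\rho+h_{j}>(n-1)\rho_{max}$, and for the smaller one, $e-\rho'+h_{j}>e-\rho+h_{j}>(n-1)\rho_{max}$ because $\rho'<\rho$. Applying the claim with $m=n-1$ gives $V_{n-1}^{*}(e-\rho+h_{j},h_{j})=V_{n-1}^{*}(e-\rho'+h_{j},h_{j})=(n-1)g(\rho_{max})$ for every $j$, and since $\sum_{j}q_{ij}=1$ the continuation terms coincide. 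Subtracting,
\[
V_{n}(e,h_{i},\rho)-V_{n}(e,h_{i},\rho') = g(\rho)-g(\rho') > 0
\]
by strict monotonicity of $g$, which is the desired inequality.

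I expect the only real care to be in the induction establishing saturation — specifically, tracking the energy lower bound across slots and confirming it never falls below $(m-1)\rho_{max}$ for any realization of the harvest process. Because harvests are nonnegative, the worst case is the all-zero harvest path, and the threshold $(n-1)\rho_{max}+\rho$ is precisely calibrated to cover spending $\rho$ now plus $\rho_{max}$ in each of the remaining $n-1$ slots along that path; once this bookkeeping is verified the rest is routine. It is worth noting that the argument does not invoke Assumption~\ref{assumption:funny}: in the saturated regime the values are \emph{equal} rather than merely monotone in energy, so the comparison between $\rho$ and $\rho'$ is exact.
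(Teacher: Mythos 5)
Your proof is correct and follows essentially the same route as the paper's: the paper likewise observes that for $e > (n-1)\rho_{max}+\rho$ both continuation values saturate at $(n-1)g(\rho_{max})$, so the comparison reduces to $g(\rho) > g(\rho')$. The only difference is that you supply the backward induction justifying the saturation identity $V_{m}^{*}(e',h)=m\,g(\rho_{max})$ for $e'\geq m\rho_{max}$, which the paper asserts without proof.
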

\begin{proof}
Since $g(\rho_{max})$ is the highest throughput for a slot duration, if $e > (n-1)\rho_{max}+\rho$ , then $V_{n-1}^{*} \left( \left( e-\rho \right)_{+}+h,h\right)$ and $V_{n-1}^{*} \left( \left( e-\rho'\right)_{+}+h,h\right)$ are both equal to $(n-1)g(\rho_{max})$ for any $h>0$. Hence, $V_{n} \left( e,h_{i},\rho\right)=g(\rho)+(n-1)g(\rho_{max})$ is larger than $V_{n} \left( e,h_{i},\rho'\right)=g(\rho')+(n-1)g(\rho_{max})$.
\end{proof}
\begin{lemma}
Let $\rho > \rho'$ where $(\rho,\rho')\in U^{2}$, then $V_{n} \left( e,h_{i},\rho\right) \leq V_{n} \left( e,h_{i},\rho'\right)$
when $e \leq \frac{g(\rho')}{g(\rho)}\rho$ for any $n$ and $h_{i}$.
\end{lemma}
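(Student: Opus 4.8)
The plan is to decompose the one-step value $V_n(e,h_i,\rho)$ into its immediate-throughput term $g_r(e,\rho)$ and its continuation term $\sum_j q_{ij} V_{n-1}^*((e-\rho)_+ + h_j, h_j)$, and to show that under the hypothesis $e \le \frac{g(\rho')}{g(\rho)}\rho$ \emph{each} term is no larger for the higher power level $\rho$ than for the lower level $\rho'$. The first thing I would record is that, since $g$ is increasing and $\rho > \rho'$, we have $g(\rho) > g(\rho')$, so the threshold satisfies $\frac{g(\rho')}{g(\rho)}\rho < \rho$. The hypothesis therefore forces $e < \rho$ and hence $(e-\rho)_+ = 0$, collapsing the continuation term for $\rho$ to $\sum_j q_{ij} V_{n-1}^*(h_j, h_j)$.

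For the immediate term I would show $g_r(e,\rho) \le g_r(e,\rho')$ by splitting into two cases. If $e \ge \rho'$ then $g_r(e,\rho') = g(\rho')$, while $g_r(e,\rho) = g(\rho)\frac{e}{\rho} \le g(\rho)\cdot\frac{g(\rho')}{g(\rho)} = g(\rho')$ directly from the hypothesis. If instead $e < \rho'$ then both power levels operate at partial utilization, and the comparison reduces to $\frac{g(\rho)}{\rho} \le \frac{g(\rho')}{\rho'}$; this is exactly the energy-efficiency property stated in the model (communicating at a lower rate is more efficient per unit energy), i.e. $g(\rho)/\rho$ is non-increasing in $\rho$ for concave $g$ through the origin. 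Either way, $g_r(e,\rho) \le g_r(e,\rho')$.

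For the continuation term I would use that $(e-\rho')_+ \ge 0 = (e-\rho)_+$, so for every harvest realization $h_j$ the energy $(e-\rho')_+ + h_j$ entering slot $n-1$ under $\rho'$ is at least $h_j$, the energy entering under $\rho$. Provided the value function is monotone non-decreasing in stored energy, this yields $V_{n-1}^*((e-\rho')_+ + h_j, h_j) \ge V_{n-1}^*(h_j, h_j)$ termwise; summing against the nonnegative weights $q_{ij}$ gives the domination of the continuation terms. Combining with the immediate-term inequality yields $V_n(e,h_i,\rho) \le V_n(e,h_i,\rho')$.

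The one ingredient not yet in hand is the monotonicity of $V_{n}^*(\cdot, h)$ in its energy argument, and I expect establishing this to be the main obstacle. I would prove it by backward induction on $n$: the base case is immediate from the explicit piecewise form of $V_1^*$ in (\ref{eq:V1*e}), which is non-decreasing in $e$; for the inductive step, in the dynamic programming recursion (\ref{eq:Vn*}) the map $e \mapsto g_r(e,\rho)$ is non-decreasing and $e \mapsto (e-\rho)_+ + h_j$ is non-decreasing, so composing with the inductive hypothesis makes each $V_n(e,h_i,\rho)$ non-decreasing in $e$, a property preserved by taking the maximum over $\rho \in \mathbf{U}$. With monotonicity in place, the two-term argument above closes the proof.
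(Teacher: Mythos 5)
Your proof is correct and follows essentially the same route as the paper's: decompose $V_n$ into the immediate term and the continuation term, bound the immediate term using $g_r(e,\rho)=g(\rho)e/\rho \le g_r(e,\rho')$ under the hypothesis, and bound the continuation term via monotonicity of $V_{n-1}^*$ in stored energy. The only difference is that you supply details the paper merely asserts (the two-case comparison of the immediate terms and the inductive proof that the value function is nondecreasing in energy), which strengthens rather than changes the argument.
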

\begin{proof}
For $e \leq \frac{g(\rho')}{g(\rho)}\rho$, $V_{n} \left( e,h_{i},\rho\right)=g(\rho)\frac{e}{\rho}+\displaystyle\sum_{j}q_{ij}V_{n-1}^{*} \left(h_{j},h_{j}\right)$ and $V_{n} \left( e,h_{i},\rho'\right)=g_{r}(e,\rho')+ \displaystyle\sum_{j}q_{ij}V_{n-1}^{*} \left( e-\rho' +h_{j},h_{j}\right)$. The value function is an nondecreasing function of energy, thus $V_{n-1}^{*} \left(h_{j},h_{j}\right)$ is smaller than $V_{n-1}^{*} \left( e-\rho' +h_{j},h_{j}\right)$ for any $h_{j}$. Also, $g(\rho)\frac{e}{\rho}$ cannot be larger than $g_{r}(e,\rho')$ when $e \leq \frac{g(\rho')}{g(\rho)}\rho$. Therefore, $V_{n} \left( e,h_{i},\rho'\right)$ is larger than or equal to $V_{n} \left( e,h_{i},\rho\right)$ for any $n$ and $h_{i}$.
\end{proof}
\begin{theorem}
The decision rule $\rho_{n}^{*}\left( e,h_{i}\right)$ is an increasing (piecewise constant) function of $e$ for any $n$ and $h_{i}$.
\end{theorem}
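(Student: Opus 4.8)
The plan is to establish the two assertions in the statement separately: first that $\varrho_n^*(\cdot,h_i)$ takes only finitely many values and is piecewise constant, and then that it is monotone nondecreasing in $e$. The piecewise-constant part is purely structural and rests on the finiteness of $\mathbf{U}$, whereas the monotonicity is driven entirely by the single-crossing property encoded in Assumption~\ref{assumption:funny}, with the two preceding lemmas fixing the endpoints of the resulting staircase.

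First I would prove, by backward induction on $n$, that for each fixed $h_i$ and each fixed $\rho$ the map $e \mapsto V_n(e,h_i,\rho)$ is continuous and piecewise linear in $e$. The base case is immediate from the explicit form (\ref{eq:V1*e}), which is continuous and piecewise linear with breakpoints at the levels $\rho_m$ and at the ratios $\frac{g(\rho_m)}{g(\rho_{m+1})}\rho_{m+1}$. For the inductive step, $g_r(e,\rho)$ is continuous and piecewise linear in $e$ by (\ref{eq:gr}); $V_{n-1}^*$ is a pointwise maximum of finitely many continuous piecewise-linear functions, hence itself continuous and piecewise linear; and composing with the affine clipping $e\mapsto (e-\rho)_+ + h_j$ and averaging over $j$ with the fixed weights $q_{ij}$ preserves both properties. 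Since $\varrho_n^*(e,h_i)$ is an $\arg\max$ of finitely many such curves, the maximizing index can change only at the finitely many points where two of the curves cross, so $\varrho_n^*(\cdot,h_i)$ is piecewise constant.

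For monotonicity I would argue by contradiction, after fixing a tie-breaking convention that makes the $\arg\max$ a genuine function: among all maximizing powers, select the \emph{smallest}. Suppose $e_1<e_2$ but $\rho_2 := \varrho_n^*(e_2,h_i) < \varrho_n^*(e_1,h_i)=:\rho_1$. Because $\rho_1$ is the smallest maximizer at $e_1$ and $\rho_2<\rho_1$, the lower power is \emph{strictly} suboptimal there, $V_n(e_1,h_i,\rho_1) > V_n(e_1,h_i,\rho_2)$. Applying Assumption~\ref{assumption:funny} to the pair $\rho_1>\rho_2$ with increment $\delta = e_2-e_1>0$ yields $V_n(e_2,h_i,\rho_1) > V_n(e_2,h_i,\rho_2)$, contradicting the optimality of $\rho_2$ at $e_2$. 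Hence $\varrho_n^*(\cdot,h_i)$ is nondecreasing. The two preceding lemmas then bracket the staircase: the low-energy lemma ($e\le \frac{g(\rho')}{g(\rho)}\rho$) forces the lower of any two powers for small $e$, consistent with Theorem~\ref{theoremrhomin} giving $\rho_{min}$ below $\rho_{min}$, while the high-energy lemma ($e>(n-1)\rho_{max}+\rho$) forces the higher power, so the rule climbs from $\rho_{min}$ up to $\rho_{max}$.

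The main obstacle is the handling of ties, which is exactly where Assumption~\ref{assumption:funny} shows its limits: it supplies only the \emph{strict} half of a single-crossing condition. Without a tie-breaking rule the $\arg\max$ is merely a correspondence, and the strict condition alone does not orient its jumps. Choosing the smallest maximizer (and not the largest) is what closes the argument, since it guarantees the strict gap at the lower energy $e_1$ that Assumption~\ref{assumption:funny} is able to propagate upward; with the largest-maximizer convention one can construct value configurations, consistent with the assumption, in which a tie at $e_1$ reverses into a strict preference for the lower power at $e_2$, so monotonicity would fail. A secondary technical point is verifying carefully that the recursion preserves continuous piecewise-linearity through the $(\cdot)_+$ clipping and the Markov averaging, which is what justifies the ``finitely many breakpoints'' claim underlying piecewise constancy.
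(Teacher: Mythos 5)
Your proof is correct, and it rests on the same engine as the paper's --- the single-crossing property of Assumption 1, with Lemmas 1 and 2 supplying the endpoints --- but the mechanics differ in a way worth noting. The paper argues constructively: for each pair $\rho_{(i)}>\rho_{(j)}$ it uses Lemma 2 to exhibit a low-energy region where the lower power wins, Lemma 1 to exhibit a high-energy region where the higher power wins, and Assumption 1 to conclude that these regions are separated by a single crossing point $e_{(i,j)}$; it then aggregates the pairwise thresholds via $e_{(i)}=\max_{j}e_{(i,j)}$ to obtain the staircase. You instead give a direct contradiction argument: if the selected power dropped from $e_{1}$ to $e_{2}>e_{1}$, the strict preference at $e_{1}$ would propagate to $e_{2}$ by Assumption 1 and contradict optimality there. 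Your route is leaner for the monotonicity claim itself (Lemmas 1 and 2 become decoration rather than load-bearing), and it is more careful on two points the paper glosses over: the tie-breaking convention needed to make the $\arg\max$ a well-defined function --- your observation that the \emph{smallest}-maximizer rule is the one compatible with the strict form of Assumption 1 is a genuine refinement, since with the largest-maximizer rule a tie at $e_{1}$ could legally reverse at $e_{2}$ --- and the justification of ``piecewise constant,'' which you derive from a backward induction showing each $V_{n}(\cdot,h_{i},\rho)$ is continuous and piecewise linear (this step implicitly needs the harvest state space to be finite so that the Markov averaging is a finite sum, which holds in the paper's setting). What the paper's construction buys in exchange is slightly more information: explicit pairwise thresholds $e_{(i,j)}$ that can in principle be computed, which is the structural observation motivating the Expected Threshold policy later in the paper.
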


\begin{proof}
Lemma $1$ shows that there is an energy level $e_{(i,j)}^{high}$ where the higher power decision $\rho_{(i)}$ is more desirable than the lower power decision $\rho_{(j)}$ for every pair $(\rho_{(i)},\rho_{(j)})\in U^{2}$. Similarly, Lemma $2$ shows that there is an energy level $e_{(i,j)}^{low}$ where the lower power decision $\rho_{(j)}$ is preferable to the higher power decision $\rho_{(i)}$ for every pair of decisions $(\rho_{(i)},\rho_{(j)})\in U^{2}$. According to assumption $1$, if the higher power level $\rho_{(i)}$ is preferred at a certain energy level, the higher power level will still be more desirable at a higher energy level. These imply that there is an energy level $e_{(i,j)}$ for every $(\rho_{(i)},\rho_{(j)})\in U^{2}$ such that below which the lower power decision $\rho_{(j)}$ is more 
desirable and above which the higher power decision $\rho_{(i)}$ is more 
desirable. Accordingly, there is an energy level $e_{(i)}=\displaystyle\max_{j,\rho_{(j)} \in U} e_{(i,j)} $ for every power decision $\rho_{(i)}$ such that the value function of $\rho_{(i)}$ is larger than the value function of any $\rho_{(j)}$ lower than $\rho_{(i)}$. Therefore, the optimal power level decisions increase in energy.   
\end{proof}

\begin{figure}[htpb]
\centering
\begin{subfigure}[]
\centering \includegraphics[scale=0.27]{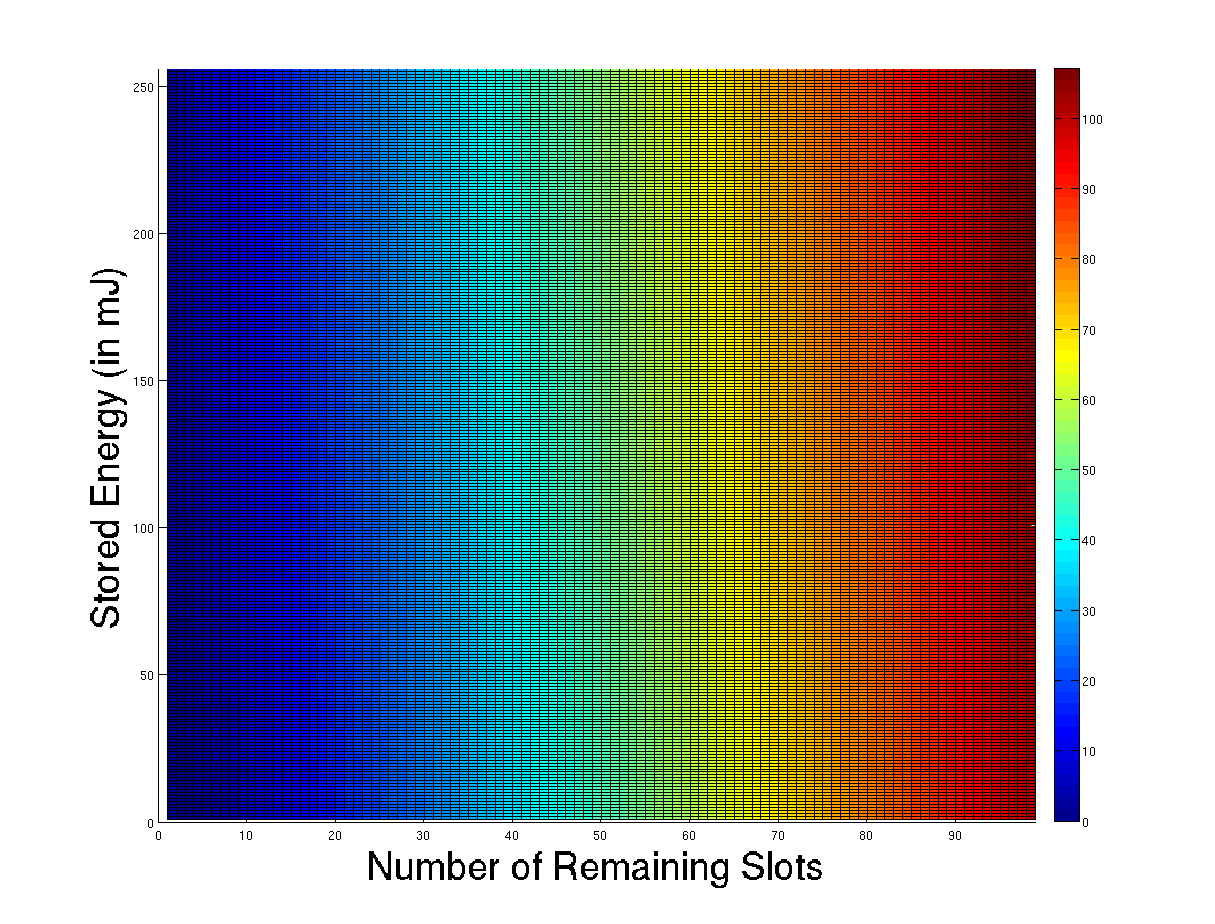}
\end{subfigure}

\begin{subfigure}[]
\centering \includegraphics[scale=0.27]{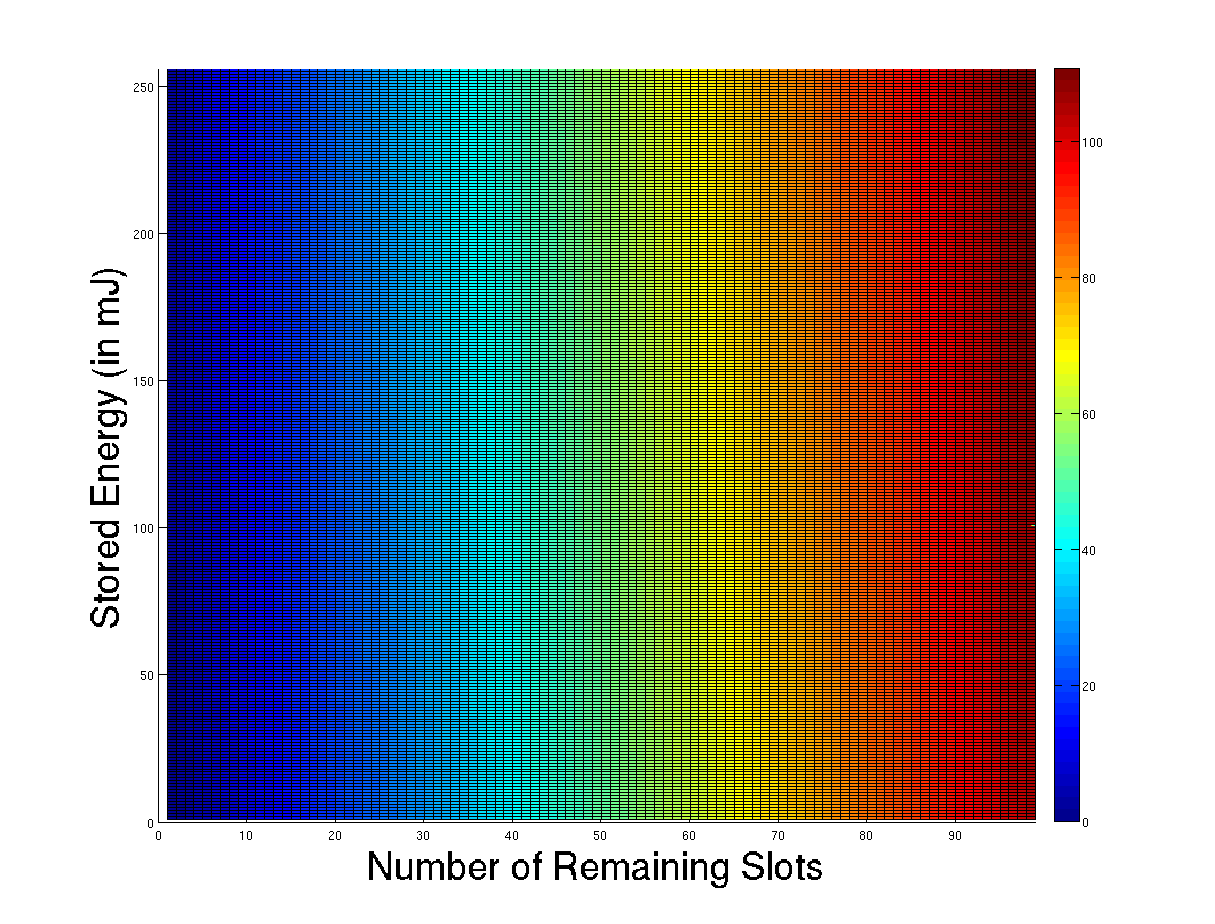}

\end{subfigure}
\caption{ Value function  (a) $V_{n}^{*}\left(e,h_{0}\right) $, (b) $V_{n}^{*}\left(e,h_{1}\right) $ against stored energy $e_{n}$ and number of remaining slots $n$ for burst arrival Markov model which has 2 states ($h_{0}=0$ and $h_{1}=256$mJ) with transition probabilities $q_{00}=0.9$, $q_{01}=0.1$, $q_{10}=0.5$, $q_{11}=0.5$.}
\label{valuefunction120256} 
\end{figure}


Although the dynamic programming approach provides an optimal solution for the Markovian case, its computational complexity is exponential in the time horizon $N$. To evaluate the value functions $V_{n}^{*} \left( e,h_{i}\right)$, all possible transmission trajectories should be examined and since there are $\vert \mathbf{U} \vert^{N}$ possible transmission trajectories, a dynamic programming based algorithm has a time complexity exponential in $N$. This complexity will not be a problem when online computation can be substituted by a table look-up, from decision rules prepared before transmission. However, in some cases statistical information on energy harvesting process may need to be updated, which makes real-time computation a necessity. For these reasons, low complexity online policies may be prefered.

\section{SUBOPTIMAL SOLUTIONS}

In this section, two suboptimal policies will be described. The {\emph{Expected Threshold}} policy is proposed as a computationally cost-effective suboptimal solution and is considered as a  contribution of this work. A {\emph{Greedy}} policy is proposed for performance evaluation purposes.

\subsection{Expected Threshold Policy}

The Expected Threshold Policy is defined as the following:

\vspace{0.1 in}

\begin{center}
\fbox{
\begin{Beqnarray*}
\\\varrho_{n}(H_{n}^{N},\hat{e}_{n}) &
=&
 \displaystyle\max \left\lbrace \rho \in \mathbf{U} \vert L_{n}(H_{n}^{N},\rho) \leq \hat{e}_{n} \right\rbrace
\\ L_{n}(H_{n}^{N},\rho)&=& \displaystyle\max\left( \rho,\rho n - \displaystyle\sum_{l=1}^{n-1}E\left[H_{l} \vert H_{n}^{N}\right]\right) 
\\\mbox{for $\rho \neq \rho_{min}$}&;& L_{n}(H_{n}^{N},\rho_{min})=0
\end{Beqnarray*}}

\end{center}

\vspace{0.1 in} 

In the above, $L_{n}(H_{n}^{N},\rho)$ is the minimum energy level at which the power level $\rho$ is chosen. We refer to this as the ``expected threshold'' for the level $\rho$.

The computational complexity of the Expected Threshold policy is $O((\vert \mathbf{U} \vert-1)N)$, as $(\vert \mathbf{U} \vert-1)$ threshold calculations, each of complexity $O(N)$, are  performed for each slot. It should be added that, unlike the dynamic programming solution, it does not assume a first-order Markov harvest process.

\subsubsection{Derivation of the policy}
For a moment, let us consider the offline problem where  information about energy harvest amounts $H_{n}$ is revealed before the start of transmission and power decision levels are picked from a continuous set such as $\mathbf{R}$. Optimal transmission power decisions can be obtained using a ``stretched string" method~\cite{5464947,AStretchedString}. This optimal solution dictates that constant power transmission
should be applied as long as possible with highest power levels. The optimal power level $\tilde{\rho}_{n}^{*}$ should be less than $e_{n}$ and satisfy the following inequality for all $a=1,....,(n-1)$.

\begin{equation}
e_{n}+\displaystyle\sum_{l=a}^{n-1} H_{l} \geq (n-a+1)\tilde{\rho}_{n}^{*}
\end{equation}

The optimal decision $\tilde{\rho}_{n}^{*}$ which is the highest power level satisfying this condition is given below.

\[
\tilde{\rho}_{n}^{*}(e_{n})= \displaystyle\min_{a=1,....,(n-1)} \left( e_{n},\tilde{\rho}_{n}(e_{n},a)\right) ,\text{  where }
\]
\[
\tilde{\rho}_{n}(e_{n},a)=\frac{e_{n}+\displaystyle\sum_{l=a}^{n-1} H_{l} }{n-a+1}
\]

Since $\tilde{\rho}_{n}^{*} \in \mathbf{R}$ and $\mathbf{U} \subset \mathbf{R}$, offline optimal throughput values achieved with real-valued power decisions $\tilde{\rho}_{n}^{*}$s dominate any online solution for every realization of the energy harvesting process.

Accordingly, an optimal online policy is one which minimizes the difference from the optimal offline throughput:

\begin{equation}
\varrho^{*} = \displaystyle\arg\min_{\varrho} \displaystyle\sum_{n=1}^{N} E\left[g(\tilde{\rho}_{n}^{*}(e_{n}))-g_{r}(\hat{e}_{n},\varrho_{n})\right] 
\label{varmin} 
\end{equation}

Note that the stored energy process of the online policy $\varrho$ are represented with $\hat{e}_{n}$s since they are different than the stored energy process $e_{n}$s which depend on deterministic optimal decisions.

The online decision at slot $n$ is informed by $ H_{n}^{N}$ and $\hat{e}_{n}$. Therefore, by taking the offline rule $E\left[ \tilde{\rho}_{n}^{*}(\hat{e}_{n}) \vert H_{n}^{N}\right]$ as reference we define the online decision rule at slot $n$
for $\hat{e}_{n} \geq \rho_{min}$ \footnote{For $\hat{e}_{n} < \rho_{min}$, $\varrho_{n}(H_{n}^{N},\hat{e}_{n})$ can be chosen as $\rho_{min}$ as it is the optimal decision by Theorem \ref{theoremrhomin}.} as given below:

\begin{equation}
\varrho_{n}(H_{n}^{N},\hat{e}_{n})= \displaystyle\max \left\lbrace \rho \in \mathbf{U} \vert \rho \leq E\left[ \tilde{\rho}_{n}^{*}(\hat{e}_{n}) \vert H_{n}^{N}\right] \right\rbrace
\label{detop}  
\end{equation}

Applying the law of total expectation and Jensen's inequality inside the summation in (\ref{varmin}), the difference between expected offline optimal throughput and the expected throughput of the online policy $\varrho$ can be upperbounded as below.

\[
\displaystyle\sum_{n=1}^{N}E\left[g(\tilde{\rho}_{n}^{*}(e_{n}))\right] -\displaystyle\sum_{n=1}^{N} E\left[g(\varrho_{n}(H_{n}^{N},\hat{e}_{n}))\right]
\leq
\]  
\[
\displaystyle\sum_{n=1}^{N}E\left[g(E\left[ \tilde{\rho}_{n}^{*}(e_{n})\vert H_{n}^{N} \right] ) -g(\varrho_{n}(H_{n}^{N},\hat{e}_{n}))\right]
\]

In the above, the LHS is positive and gets smaller as the online decision $\varrho_{n}(H_{n}^{N},\hat{e}_{n})$ gets close to $E\left[ \tilde{\rho}_{n}^{*}(e_{n})\vert H_{n}^{N} \right] $. The
rule defined in (\ref{detop}) selects a decision close to $E\left[ \tilde{\rho}_{n}^{*}(\hat{e}_{n}) \vert H_{n}^{N}\right]$ guaranteeing that $\varrho_{n} \leq \hat{e}_{n}$ but the online energy level
$\hat{e}_{n}$ is different than the deterministic optimal energy level $e_{n}$. However, minimizing the distance between power decisions corresponds to minimizing the distance between energy levels when decisions are selected so that $\rho_{n} \leq e_{n}$. In this sense, the rule in (\ref{detop}) tracks deterministic optimal energy levels $e_{n}$s. 
%
%
In general, computing the expectation $E\left[ \tilde{\rho}_{n}^{*}(\hat{e}_{n}) \vert H_{n}^{N}\right]$ involves a minimization over random variables. For the sake of simplicity, the online decision can be based on only the expectation of $\tilde{\rho}_{n}(\hat{e}_{n},1)$ which is the dominant variable determining $\tilde{\rho}_{n}^{*}(\hat{e}_{n})$ in most cases. Let us define such a decision rule $\varrho_{n}(H_{n}^{N},\hat{e}_{n})$ for $\hat{e}_{n} \geq \rho_{min}$ as follows:

\begin{equation}
\varrho_{n}(H_{n}^{N},\hat{e}_{n})= \displaystyle\max \left\lbrace \rho \in \mathbf{U} \vert \rho \leq  \displaystyle\min(\hat{e}_{n},E\left[ \tilde{\rho}_{n}(\hat{e}_{n},1) \vert H_{n}^{N}\right] \right\rbrace
\label{detopa}  
\end{equation}

The decision rule in (\ref{detopa}) is just an alternative expression of the Expected Threshold policy.

\subsection{Greedy Policy}
The Greedy Policy is a simple policy that, at the beginning of any slot, sets the transmission power to the highest level that can be used for the whole slot. In other words, $\rho_{n} \leq e_{n}$. Explicitly, Greedy is defined by the following decision rule:
\[
\varrho_{n}(H_{n}^{N},\hat{e}_{n})= \displaystyle\max \left\lbrace \rho \in \mathbf{U} \vert \rho \leq
\hat{e}_{n} \right\rbrace \mbox{; for $\hat{e}_{n} \geq \rho_{min}$}
\] 

When harvest rate (power input) is large enough, the expected threshold  $L_{n}(H_{n}^{N},\rho)$ approaches $\rho$ and Greedy makes the same choice as the Expected Threshold policy does.

\section{EXTENSION TO A TIME VARYING CHANNEL}

For completeness of the treatment, in this section we extend the problem formulation to a time-varying channel.
 
Provided that perfect channel state information is available at the transmitter, variation in channel state introduces just another dimension to the state space of the problem. In principle, this can be straightforwardly incorporated into the problem setup and solution method, as will be shown in the rest of this section. 

However, it should be noted that the ease by which this formulation seems to handle a fading channel is because of (quite standard) assumptions that are made about the channel state process, and these assumptions may not always capture what happens in a realistic system. In wireless channels, depending on the relative movement of scatterers and transceiver units, channel fading may occur at different time scales. For example, in an indoor channel with a long coherence time (on the order of half a second), channel gain may stay relatively constant over tens of time slots (considering a slot length of about 10 ms). On the contrary, in an outdoor scenario with high mobility, channel state may significantly change from one slot to the next. Hence, the specific model for the channel state process highly depends on the choice of slot length with respect to fading dynamics. Furthermore, feedback about the channel state to the transmitter may in practice will not be perfect or timely. Acknowledging these weaknesses of the model, within the scope of this paper, we proceed with the perfect channel state information assumption.

Accordingly, let the channel gain during slot $k$ be given by $\gamma_k$, chosen from a discrete set of values. According to our earlier definition, the communication rate $r_k$ is $g(\gamma_k\rho)$, and the function $g_{r}$ may be extended as the following.

 \begin{equation}
 g_{r}(e,\gamma,\rho)= g(\gamma\rho)\min\left(\frac{e}{\rho},1\right) 
\label{eq:gr}
\end{equation}

Accordingly, we can define the optimal online policy $\varrho^{*}$ for a time-varying channels as:

 \begin{equation}
\varrho^{*} = \displaystyle\arg\max_{\varrho}  \displaystyle\sum_{n=1}^{N} E\left[g_{r}(e_{n},\gamma_{n} ,\varrho_{n}(H_{n}^{N},\gamma_{n}^{N},e_{n}))\right] 
\label{eq:pi*}
\end{equation}

where $\gamma_{n}^{N}$ denotes the vector $[\gamma_{n},....,\gamma_{N}]$.

Let us assume $\gamma_{n}$, $n\geq 1$ as a first-order Markov process  with  transition probability $f_{uv}$ between channel states $\gamma_{u}$ and $\gamma_{v}$, such that  $f_{uv}=P(\gamma_{n}=\gamma_{v}\vert \gamma_{n+1}=\gamma_{u})$. Then, the optimal solution for time-varying channel case can be formulated with dynamic programming as in the following:

\begin{equation}
 V_{n}^{*} \left( e,h,\gamma \right)= \displaystyle\max_{\rho \in \mathbf{U}}  V_{n} \left( e,h,\gamma ,\rho \right), \;n>1\mbox{, where}
\label{eq:Vn*}
\end{equation}
\[
 V_{n} \left( e,h_{i},\gamma_{u}, \rho \right)= g_{r}(e,\gamma_{u},\rho)+
\]  
\[\displaystyle\sum_{j}\sum_{v}q_{ij}f_{uv}V_{n-1}^{*} \left( \left( e-\rho \right)_{+}+h_{j},h_{j},\gamma_{v}\right)
\label{eq:Vn}
\] 
\[
 V_{1}^{*} \left( e,h_{i},\gamma_{u}\right)= \displaystyle\max_{\rho \in \mathbf{U}}
g_{r}(e,\gamma_{u},\rho) 
\label{eq:V1*}
\]

Similar to (\ref{eq:V1*e}), an explicit form for $ V_{1}^{*} \left( e,h_{i},\gamma\right)$ can be written as in below:

\vspace{-0.1 in}
\begin{equation}
 V_{1}^{*} \left( e,h_{i},\gamma\right)= \left\{ \begin{array}{ll}
        g(\gamma\rho_{1})\left( \frac{e}{\rho_{1}}\right)  & \mbox{; $e < \rho_{1}$}\\
        g(\gamma\rho_{1}) & \mbox{; $\rho_{1} \leq e<\frac{g(\gamma\rho_{1})}{g(\gamma\rho_{2})}\rho_{2}$}\\
        g(\gamma\rho_{2})\left( \frac{e}{\rho_{2}}\right)  & \mbox{; $\frac{g(\gamma\rho_{1})}{g(\gamma\rho_{2})}\rho_{2} \leq e<\rho_{2}$}\\
        g(\gamma\rho_{2}) & \mbox{; $\rho_{2} \leq e<\frac{g(\gamma\rho_{2})}{g(\gamma\rho_{3})}\rho_{3}$   .......}\\
        \end{array} \right. 
\label{eq:V1*ee}
\end{equation}

Again by backward induction, the optimal online solution as a set of decision rules can be obtained:

 \begin{equation}
\varrho_{n}^{*}\left( e,h_{i},\gamma\right) = \displaystyle\arg\max_{\rho \in \mathbf{U}} V_{n} \left( e,h_{i},\gamma ,\rho \right) 
\label{eq:rn}
\end{equation}

\section{EXPECTED WATER LEVEL POLICY}
The approach taken for developing the Expected Threshold policy can be extended to the fading channel formulation. Given that the present state and the history of channel and energy harvesting process, the optimal offline power level $\tilde{\rho}_{n}^{*}$ may be considered as a stochastic process. For the offline solution, it is known that the optimal offline power level  $\tilde{\rho}_{n}^{*}$ is always lower than the stored energy $e$. Thus, $ g_{r}(e,\gamma ,\rho)$ can be replaced with $g(\gamma\rho)$, arriving at the following inequality:

\begin{equation}
\displaystyle\sum_{n=1}^{N}E\left[E\left[g(\gamma_{n} \tilde{\rho}_{n}^{*} )\vert\gamma_{n}^{N}, H_{n}^{N} \right]\right]\leq \displaystyle\sum_{n=1}^{N}E\left[g(\gamma_{n} E\left[ \tilde{\rho}_{n}^{*}\vert\gamma_{n}^{N}, H_{n}^{N} \right] )\right]
\end{equation}

Hence, applying the Expected Threshold policy could still provide a fairly good average throughput. However, because of the added dimensionality, the computation of the expected value of the optimal offline power level is harder in the fading case than it was in the static channel case. On the other hand, it was shown in ~\cite{5513719} that, when  power levels are continuous, the finite-horizon throughput-optimal offline policy is a {\emph{waterfilling}} policy where water levels are nondecreasing as deadline approaches. Accordingly, the optimal power level for offline solution  is given by $\tilde{\rho}_{n}^{*}=(\tilde{w}_{n}-\frac{1}{\gamma_{n}})_{+}$ where $\tilde{w}_{n}$ is the water level of slot $n$.  

A lower bound for expected offline power level $\tilde{\rho}_{n}^{*}$ can be found as $E[\tilde{\rho}_{n}^{*}]\geq (E[\tilde{w}_{n}]-\frac{1}{\gamma_{n}})_{+}$ . Then, a conservative online decision for discrete power level case can simply be the lowest power level in the set $\mathbf{U}$ that is higher than $(E[\tilde{w}_{n}]-\frac{1}{\gamma_{n}})_{+}$ We name this policy as ``Expected Water Level Policy".

\fbox{
\begin{Beqnarray*}
\\\varrho_{n}(H_{n}^{N},\hat{e}_{n}) &
=&
 \displaystyle\max \left\lbrace \rho \in \mathbf{U} \vert L_{n}(H_{n}^{N},\rho) \leq \hat{e}_{n} \right\rbrace
\\ L_{n}(H_{n}^{N},\gamma_{n}^{N},\rho)&=& \displaystyle\max\left( \rho, e_{n}\right) 
\\\mbox{for $\rho>0$}&;& L_{n}(H_{n}^{N},\gamma_{n}^{N},0)=0
\\\mbox{where }  E[\tilde{w}_{n}(e_{n})]&=&\rho+\frac{1}{\gamma_{n}}
\end{Beqnarray*}}

 The minimum energy $L_{n}(\rho)$ at which $\rho$ is the selected power level can be set so that the expected water level $E[\tilde{w}_{n}(e_{n})]$ equals to $\rho+\frac{1}{\gamma_{n}}$. On the other hand, contrary to the previous case, the channel can remain idle and $L_{n}(0)$  can be considered as zero energy level. 
 
The computation of  $E[\tilde{w}_{n}(e_{n})]$  is explained in Appendix \ref{appendix:Computingw}. 

%
 
\section{EVALUATION}
\label{evaluation}
 
The throughput performances of the optimal online solution, expected threshold policy and Greedy have been compared, along with that of a single power level policy, which is a static reference policy whose transmit power is set to the maximum power in the set $U$ lower than the time average energy harvest rate whenever there is energy for transmission.

 A first-order Markov model for the energy harvesting process is derived from an irradiance trace (measured during a car-based roadtrip) which is available in the CRAWDAD repository
\cite{columbia-enhants-light-energy-traces}.
The time slot interval is taken as $30$s and harvested energy amounts are calculated assuming that irradiance over a $43$ ${cm}^{2}$ area can be transformed into energy with a conversion rate of $21$\%. Transmission power decisions ($5,10,23,26,74,100,159,256$mW) are based on single stream data rates for $40$MHz and short-guide interval ($400$s) in 802.11n standard assuming an AWGN channel with a noise spectral density $0.83$ nW/Hz. 

Harvested and consumed energy amounts are quantized in order to discretize the state space where value functions and decision rules are evaluated for the optimal solution with dynamic programming.

Achieved throughput values are averaged over $10^{4}$ random realizations of energy harvest profiles generated with the first-order Markov model and these values are divided by the length of transmission time to find average throughput values.

In Fig. \ref{averagethroughputroadtrip}, it appears that even simple schemes such as greedy and constant suffice. However, this performance depends on the dynamics of the energy harvesting process. To illustrate such a case,  policies are evaluated under another Markovian energy harvesting process assumption. This time, the time slot interval is taken as $1$s, energy harvesting Markov model is assumed to have 2 states ($h_{0}=0$ and $h_{1}=256$mJ) with transition probabilities $q_{00}=0.9$, $q_{01}=0.1$, $q_{10}=0.5$, $q_{11}=0.5$ to simulate a burst arrival case. Throughput performances for this case (see Fig. \ref{averagethroughput0256rayleighnakagami}.a) indicate that the simple schemes are limited to about half the optimal online throughput, while the Expected Threshold Policy closely follows the optimal online throughput.

Although it is proposed in  ~\cite{5441354} for stationary harvest processes and infinite horizon, throughput optimal (TO) policy is also simulated using the power decision equation below:

\begin{equation}
\varrho_{n}^{TO}\left( e\right) = \displaystyle\min (e,E[H])
\label{eq:to}
\end{equation}

Note that since the energy level $e$ and the average energy harvest rate $E[H]$ are arbitrary, usually the power decision $\varrho_{n}^{TO}\left( e\right)$ is not in set $U$.

As seen in Fig. \ref{averagethroughput0256TOdelay}.a, the expected threshold policy performs better than TO policy in terms of average throughput. In addition, the mean delay performance of the expected threshold policy is compared against TO in Fig. \ref{averagethroughput0256TOdelay}.b although both policies do not consider mean delay as an optimization criterion. we compute mean delay by computing a time average over the delay values seen by each transmitted bit as in the following expression:

\begin{equation}
MeanDelay =\frac{\displaystyle\sum_{n=1}^{N}(N-n+1)g_{r}(e_{n},\varrho_{n})}{\displaystyle\sum_{n=1}^{N}g_{r}(e_{n},\varrho_{n})}
\label{eq:meandelay}
\end{equation}

Then, using the same burst arrival Markov model for energy harvesting, the expected water level policy is tested under Rayleigh and Nakagami fading channel assumptions in Fig.\ref{averagethroughput0256rayleighnakagami}.b and Fig. \ref{averagethroughput0256rayleighnakagami}.c. Rayleigh and Nakagami channels are simulated as discrete channel gain processes with $7$ levels ranging from $0.1$ to $1.9$.

\section{CONCLUSIONS}

In this paper, a finite-horizon online throughput-maximizing scheduling problem with a discrete set of transmission actions has been formulated. The structure of the optimal solution of this problem has been studied through stochastic dynamic programming. Based on the observation of a threshold structure in the optimal policy, a  low complexity heuristic solution,  Expected Threshold Policy, has been proposed. The optimal and heuristic solutions are extended taking into account time-varying channels and these more general solutions are evaluated under ergodic fading. The Expected Water Level Policy, which is our proposed heuristic for the fading case, as well as the Expected Threshold Policy in the static channel case, were both observed to achieve close to optimal throughput in detailed numerical studies, significantly outperforming simple policies such as using a constant rate, or greedily spending the energy at hand. Moreover, as expected, the gap between the simple policies and our proposal widens as the energy harvesting process diverges from stationarity.

The comparison of the Expected Threshold (ET) Policy with the TO policy of~\cite{5441354} is particularly interesting. The TO policy is throughput-optimal in the infinite-horizon case for stationary energy harvest processes. Our simulation results indicate that the Expected Threshold Policy provides higher throughput than the TO policy for any given mean delay value. The ET policy has a better average throughput performance against transmission time especially for short horizon lengths. These observations indicate that an expected threshold computation, while having a much lower complexity than computing the optimal dynamic programming solution, reaps strong benefits in terms of performance and thus such adaptation seems to be worth undertaking for dynamic energy harvesting processes in short time scales, as opposed to a time-invariant policy. 

This opens up an array of questions about the performance difference between stationary and time-varying policies in relation to the statistics of the energy arrival process. We believe that the design of low complexity policies that can exhibit close to optimal performance for bounded delay will be informed by and benefit from such analysis. We aim to generalize our view of the problem to attempt this analysis in future work.

\appendices
\section{Computing the expected water level $E[\tilde{w}_{n}(e_{n})]$}
\label{appendix:Computingw} 

Water levels $\tilde{w}_{n}$ are only constrainted by energy harvesting process and energy constraints can be written as:

$\displaystyle\sum_{k=a}^{n}(\tilde{w}_{k}-\frac{1}{\gamma_{k}})_{+}\leq e_{n}+\displaystyle\sum_{k=a}^{n-1}H_{k}$ where $a \in [1, n-1]$
and
$\tilde{w}_{n} \leq e_{n}+\frac{1}{\gamma_{n}}$
\\

Since $\tilde{w}_{n}\leq\tilde{w}_{n-1}$,
\\

$\displaystyle\sum_{k=a}^{n}(\tilde{w}_{n}-\frac{1}{\gamma_{k}})_{+}\leq e_{n}+\displaystyle\sum_{k=a}^{n-1}H_{k}$ $\Rightarrow$ $\displaystyle\sum_{k=a}^{n}(\tilde{w}_{n}-\frac{1}{\gamma_{k}})\leq e_{n}+\displaystyle\sum_{k=a}^{n-1}H_{k}-\displaystyle\sum_{k=a}^{n}(\frac{1}{\gamma_{k}}-\tilde{w}_{n})_{+}$
\\
$\Rightarrow$ $\displaystyle(n-a+1)\tilde{w}_{n}-\frac{1}{\gamma_{n}}-\sum_{k=a}^{n}\frac{1}{\gamma_{k}}\leq e_{n}+\displaystyle\sum_{k=a}^{n-1}H_{k}-\displaystyle\sum_{k=a}^{n}(\frac{1}{\gamma_{k}}-\tilde{w}_{n})_{+}$
\\
$\Rightarrow$ $\tilde{w}_{n} \leq \displaystyle\frac{e_{n}+\frac{1}{\gamma_{n}}+\displaystyle\sum_{k=a}^{n-1}(H_{k}+\frac{1}{\gamma_{k}})-\displaystyle\sum_{k=a}^{n}(\frac{1}{\gamma_{k}}-\tilde{w}_{n})_{+}}{n-a+1}$
for $a \in [1, n-1]$ and
$\tilde{w}_{n} \leq e_{n}+\frac{1}{\gamma_{n}}$
\\

$\Rightarrow$ $\tilde{w}_{n} \leq \displaystyle\min_{a \in [1,n]}\frac{e_{n}+\frac{1}{\gamma_{n}}+\displaystyle\sum_{k=a}^{n-1}(H_{k}+\frac{1}{\gamma_{k}})-\displaystyle\sum_{k=a}^{n}(\frac{1}{\gamma_{k}}-\tilde{w}_{n})_{+}}{n-a+1}$

The inequality above is an equivalent to all constraints on $\tilde{w}_{n}$ and power level $\tilde{\rho}_{n}^{*}$ can be independently maximized by maximizing $\tilde{w}_{n}$, hence  $\tilde{w}_{n}$ can be the maximum value which equals to the right-hand side:

$\tilde{w}_{n} = \displaystyle\min_{a \in [1,n]}\frac{e_{n}+\frac{1}{\gamma_{n}}+\displaystyle\sum_{k=a}^{n-1}(H_{k}+\frac{1}{\gamma_{k}})-\displaystyle\sum_{k=a}^{n}(\frac{1}{\gamma_{k}}-\tilde{w}_{n})_{+}}{n-a+1}$

The above expression can be thought as a minimization of a running average where the terms $e_{n}$ and $\frac{1}{\gamma_{n}}$ are independent from the index $a$. Hence, the average is usually minimized when $a=1$ and $\tilde{w}_{n}$ can be approximated as in the following equation:

$\tilde{w}_{n} \simeq \displaystyle\frac{e_{n}+\frac{1}{\gamma_{n}}+\displaystyle\sum_{k=1}^{n-1}(H_{k}+\frac{1}{\gamma_{k}})-\displaystyle\sum_{k=1}^{n}(\frac{1}{\gamma_{k}}-\tilde{w}_{n})_{+}}{n}$

Then, the following approximation for the expected water level $E[\tilde{w}_{n}(e_{n})]$  can be used:

$E[\tilde{w}_{n}(e_{n})] \simeq 
\\
\displaystyle\frac{e_{n}+\frac{1}{\gamma_{n}}+\displaystyle\sum_{k=1}^{n-1}(E[H_{k}]+E[\frac{1}{\gamma_{k}}])-\displaystyle\sum_{k=1}^{n}E[(\frac{1}{\gamma_{k}}-\tilde{w}_{n}(e_{n}))_{+}]}{n}$
 
Assuming $E[(\frac{1}{\gamma_{k}}-\tilde{w}_{n})_{+}] \simeq (E[\frac{1}{\gamma_{k}}]-E[\tilde{w}_{n}])_{+}$, a further simplification can be made:

$E[\tilde{w}_{n}(e_{n})] \simeq 
\\
\displaystyle\frac{e_{n}+\frac{1}{\gamma_{n}}+\displaystyle\sum_{k=1}^{n-1}(E[H_{k}]+E[\frac{1}{\gamma_{k}}])-\displaystyle\sum_{k=1}^{n}(E[\frac{1}{\gamma_{k}}]-E[\tilde{w}_{n}(e_{n})])_{+}}{n}$



\bibliographystyle{ieeetr}
\bibliography{energyharvest}

\begin{figure}[htpb]
\begin{subfigure}[]
\centering \includegraphics[scale=0.27]{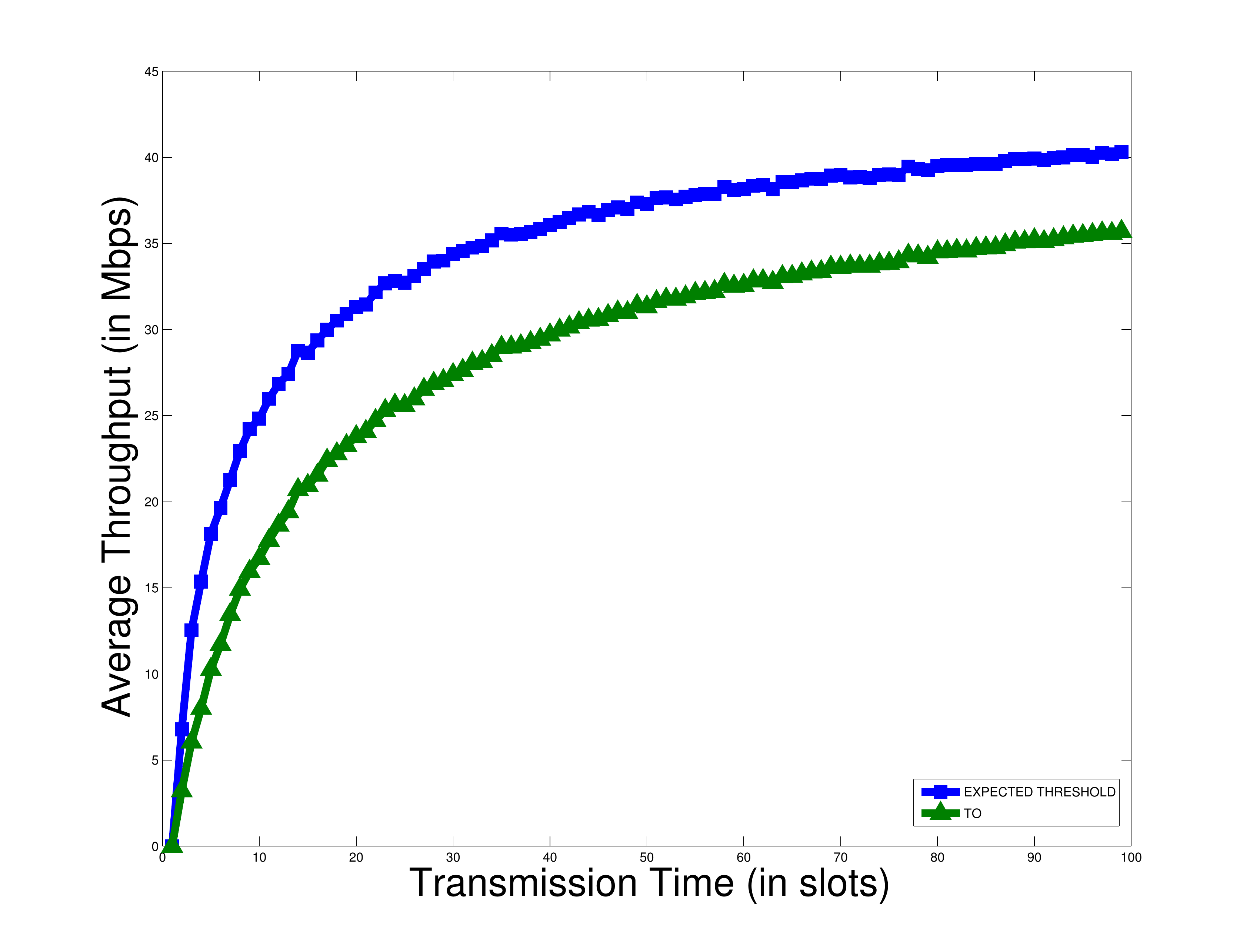}
\end{subfigure}
\begin{subfigure}[]
\centering \includegraphics[scale=0.27]{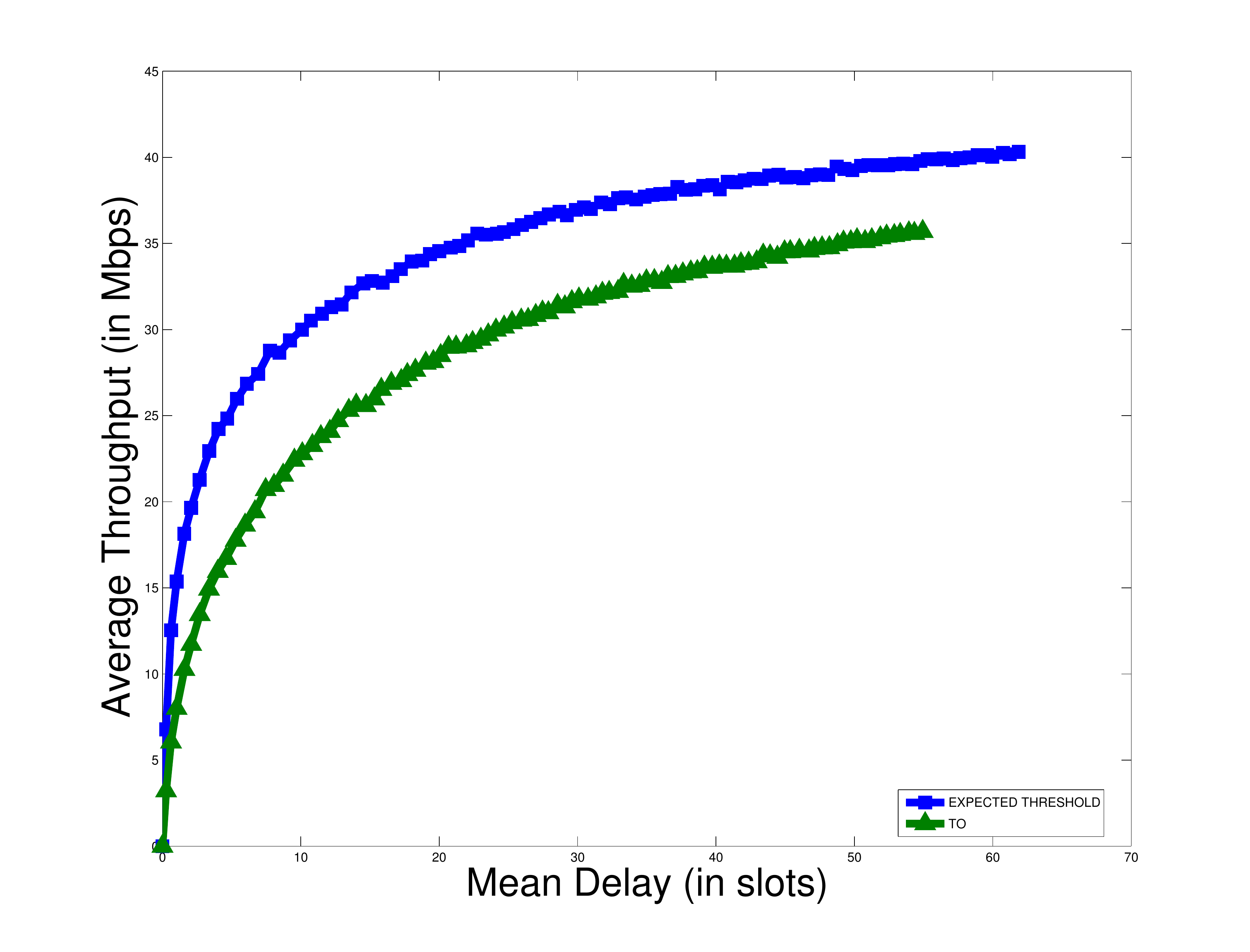}
\end{subfigure}
\caption{Average Throughput versus (a) Transmission Time , (b) Mean Delay for Expected Threshold and TO policies assuming burst arrival Markov model which has 2 states ($h_{0}=0$ and $h_{1}=256$mJ) with transition probabilities $q_{00}=0.9$, $q_{01}=0.1$, $q_{10}=0.5$, $q_{11}=0.5$.}
\label{averagethroughput0256TOdelay} 
\end{figure}

\begin{figure}[htpb]
\centering \includegraphics[scale=0.27]{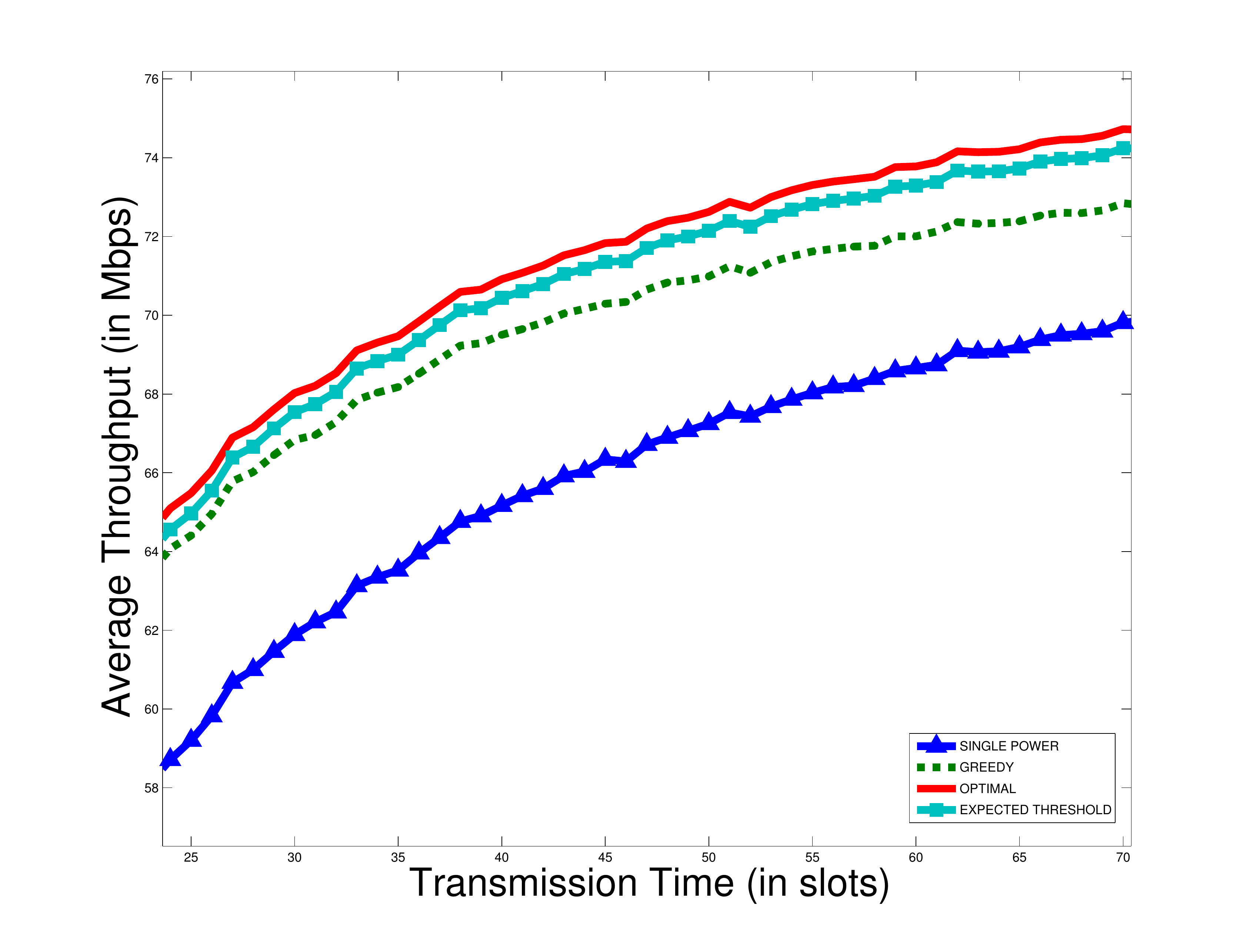}
\caption{Average Throughput versus Transmission Time for single power, greedy, optimal and Expected Threshold policies assuming Markov model derived from an irradiance trace (measured during a car-based roadtrip) which is available in the CRAWDAD repository
\cite{columbia-enhants-light-energy-traces}.}
\label{averagethroughputroadtrip} 
\end{figure}

\begin{figure}[htpb]
\begin{subfigure}[]
\centering \includegraphics[scale=0.27]{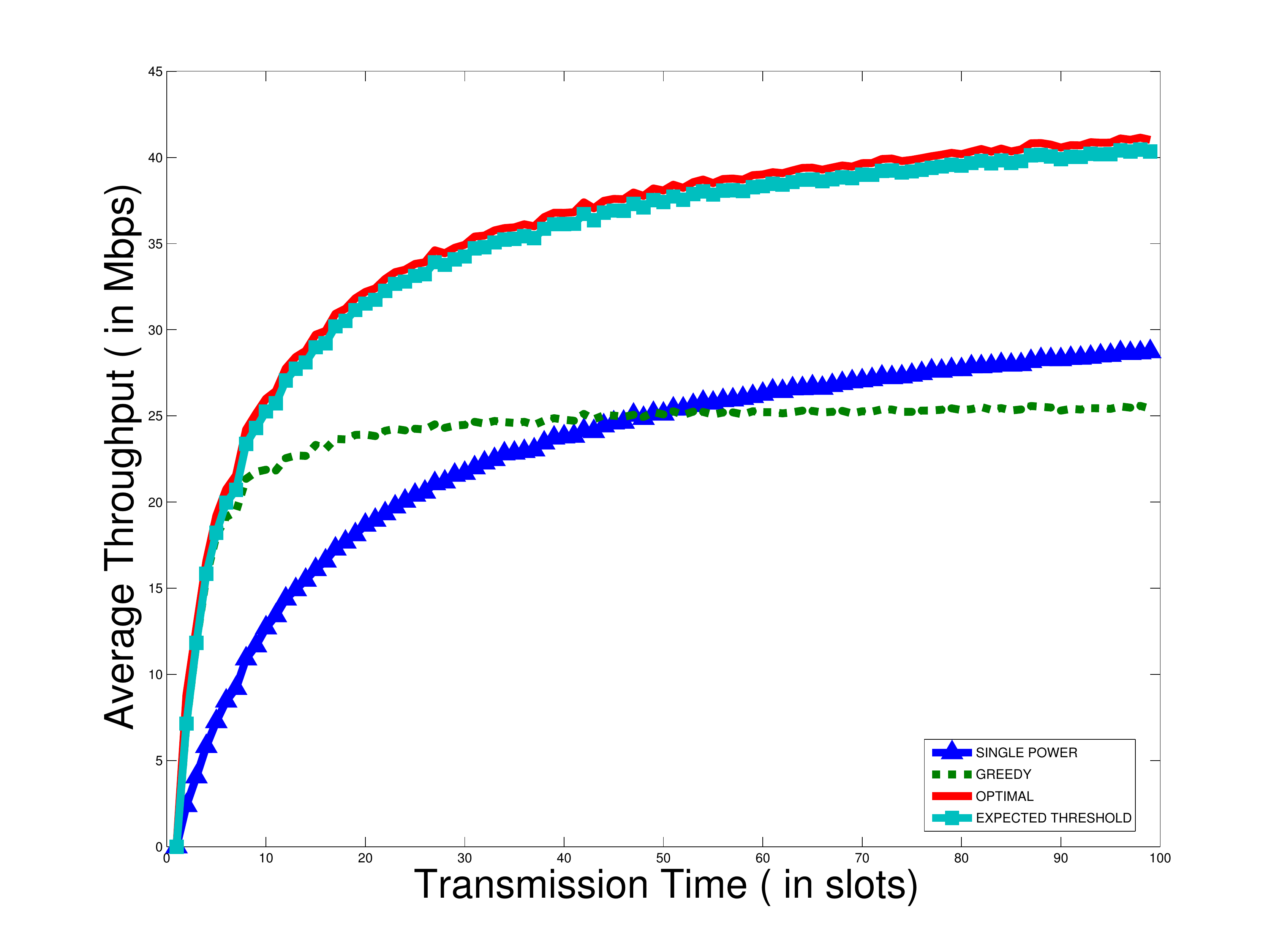}
\end{subfigure}
\begin{subfigure}[]
\centering \includegraphics[scale=0.27]{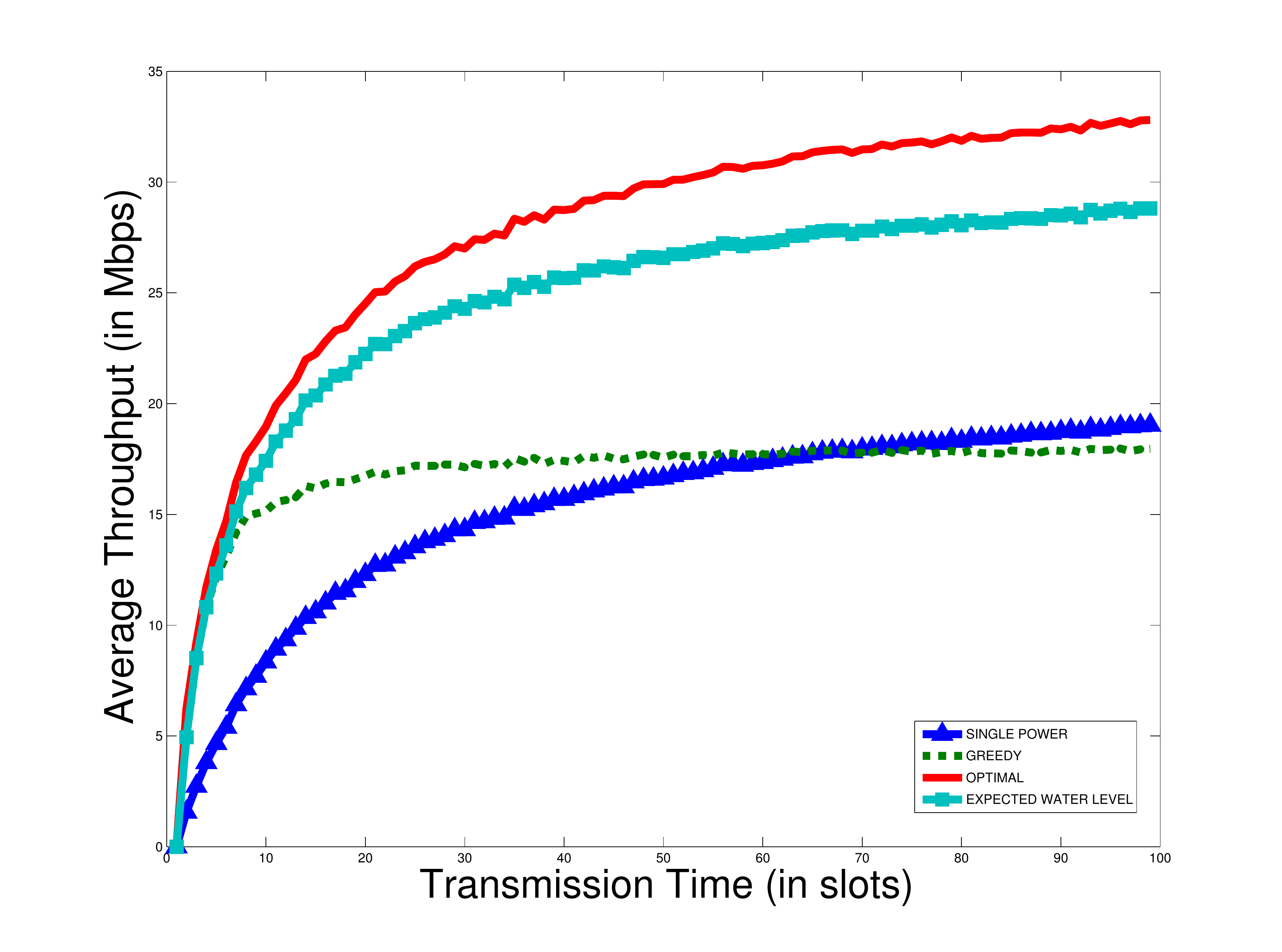}
\end{subfigure}
\begin{subfigure}[]
\centering \includegraphics[scale=0.27]{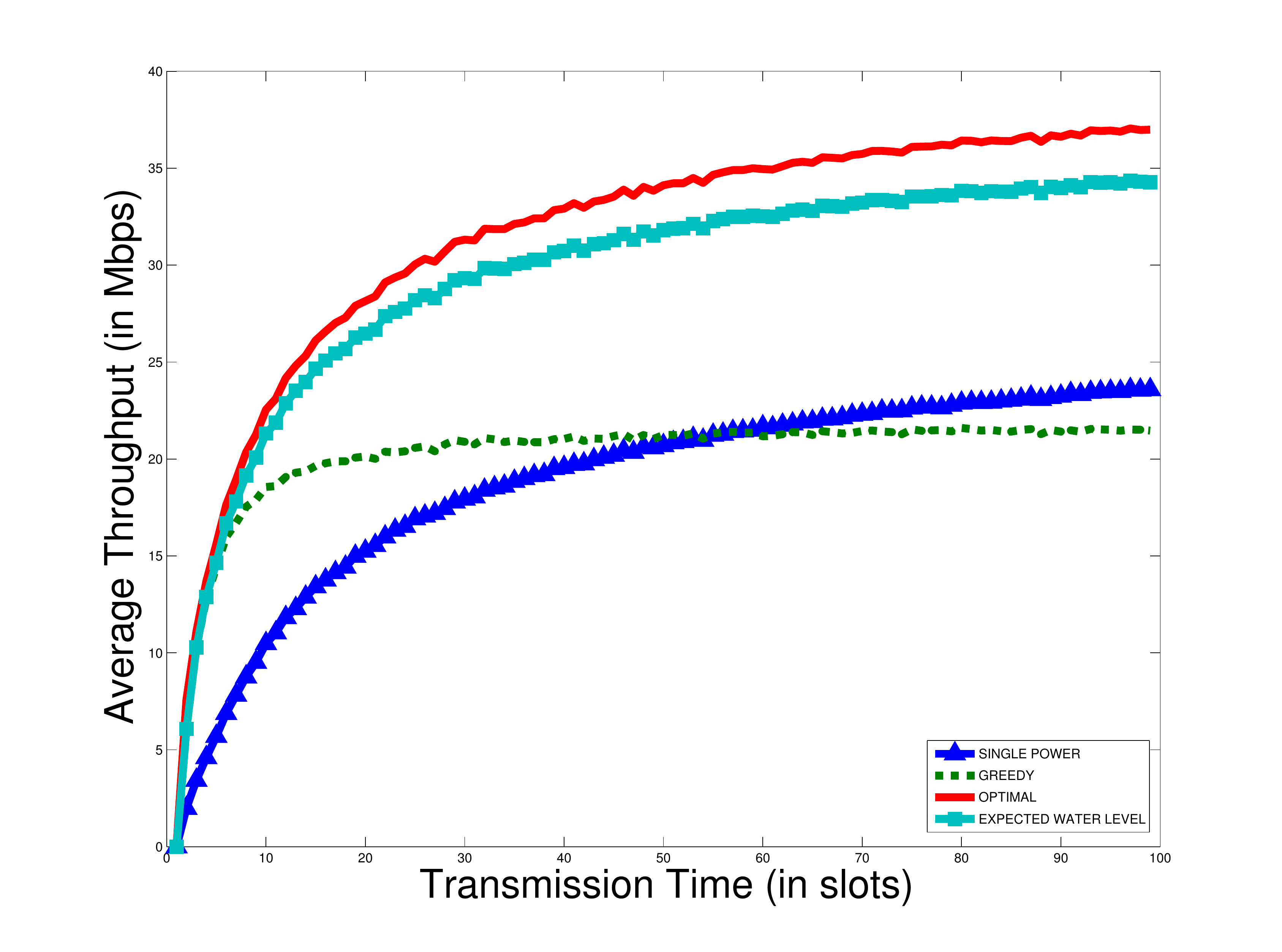}
\end{subfigure}
\caption{Average Throughput versus Transmission Time for single power, greedy, optimal and Expected Threshold policies assuming burst arrival Markov model which has 2 states ($h_{0}=0$ and $h_{1}=256$mJ) with transition probabilities $q_{00}=0.9$, $q_{01}=0.1$, $q_{10}=0.5$, $q_{11}=0.5$ under (a) static channel, (b) Rayleigh fading and (c) Nakagami fading.}
\label{averagethroughput0256rayleighnakagami} 
\end{figure}

\end{document}